\def\amsbb{\use@mathgroup \M@U \symAMSb}
\theoremstyle{plain}
\newtheorem{thm}{Theorem}[section]
\newtheorem{cor}{Corollary}[section]
\newtheorem{lemma}{Lemma}[section]
\newtheorem{prop}{Proposition}[section]
\newtheorem{defi}{Definition}[section]
\newtheorem{obs}{Remark}[section]
\numberwithin{equation}{section}
\def\C{\mathcal {C}}
\def\U{\mathcal {U}}
\def\R{\amsbb{R}}
\def\s{\mathcal{S}}
\def\ttau{{\scriptstyle{\mathcal{T}}}}
\begin{document}

\title[A mathematical objection to several relativistic particles]
 {A mathematical objection to the existence of relativistic mechanical systems of several particles}

\author{ J.  Mu\~{n}oz-D{\'\i}az and R. J.  Alonso-Blanco}

\address{Departamento de Matem\'{a}ticas, Universidad de Salamanca, Plaza de la Merced 1-4, E-37008 Salamanca,  Spain.}
\email{clint@usal.es, ricardo@usal.es}

\begin{abstract}
We will prove that, in general, a system formed by several particles moving along relativistic trajectories can not be described by a mechanical system. The contradiction that leads to the previous assertion is due to the fact that a mechanical system defines a second order differential  equation and this, in turn, induces an absolute time that will generally be incompatible with the proper times of the different particles.
\end{abstract}
\bigskip

\maketitle

\setcounter{tocdepth}{1}

\tableofcontents

\centerline{---------------}
\bigskip

\section{Introduction}\label{introduction}

In Classical Mechanics 
it is well known how to describe a system in which several point particles undergo certain given forces or, perhaps, forces of interaction between them. The system thus defined has a number of degrees of freedom according to the number of particles involved. For example,  if we have $N$ particles with 3 degrees of freedom each, the common mechanical system is described by a configuration space of dimension $3N$.

On the other hand, in the theory of Special Relativity, the configuration space of a single particle is four-dimensional and includes a time coordinate (Minkowski spacetime). Therefore, if it is required to describe a system consisting of several particles in the context of Relativity, it is not clear what the configuration space should be now: $4N$ degrees of freedom or perhaps $1+3N$?
 If, in addition, considerations are added about mutual interactions, of possible actions at distance or of invariance under, say, the Poincaré group, it is natural to deduce the difficulties to get a theory for several relativistic particles (see, for instance, \cite{Barut,Trump,Gourgoulhon} and references therein).

 We are going to situate ourselves at a more basic level simply looking for the mere possibility of existence of a mechanical system that describes the movement of several relativistic particles (regardless of the type of existing interaction). As we will show, it turns out that this is also a source of difficulties. Let us explain this at some extent:

 A mechanical system can be defined as a configuration space $M$ (an smooth manifold), together with a (pseudo-Riemannian) metric $T_2$ and a force form $\alpha$ (a horizontal differential 1-form defined on the phase space: the tangent bundle $TM$ or its dual $T^*M$). Second Newton's law then establishes a correspondence between the form of force and a second order differential equation  $D$ (equation of motion).
 The solutions of such a second order differential  equation $D$  will be called \emph{trajectories} and are, by their very nature, parameterized curves.
 When this parameter coincides with the element of length associated with the metric $T_2$ (\emph{proper time}), we will say that the system is \emph{relativistic} (in the same way we can qualify an second order differential equation or a single trajectory as relativistic). This condition is reflected in the fact that the force form necessarily belongs to the so-called \emph{contact system} of $TM$ (see Section \ref{ss:relativisticmechanical}). Particular cases are mechanical systems with zero force (geodesic systems or free particles) and also those in which the force is of the Lorentz type whose main example is, of course,  that of a charged particle in a given electromagnetic field.

 On the other hand, there is no function that can parameterize all trajectories in a given mechanical system. Despite this, there is a class of 1-differential forms that can play that role, which we have called the \emph{class of time} (it was introduced in \cite{MecanicaMunoz}). Each 1-form of the class of time when it is restricted to any trajectory, becomes $dt$, where $t$ is the parameter of that curve. As a consequence, the integral of any one of the 1-forms in the class of time along a given trajectory measures its duration. The 
 behavior of the class of time and of the second order equations against differentiable maps leads to the conclusion that the duration remains fixed with respect to these transformations: this is the absolute character of duration \cite{TiempoMunozAlonso}. In particular, when we consider an smooth map $\varphi\colon M\to N$, if $\Gamma$ is a trajectory in $M$, then $\varphi_*\Gamma$ is also a trajectory and the durations of $\Gamma$ and $\varphi_*\Gamma$ are the same.

With the previous concepts in hand, we can now raise our problem: suppose that we now have two relativistic mechanical systems. Can these systems be part of a common `larger' mechanical system? Let us ask for even less: suppose that we now have two  relativistic trajectories. Can these trajectories be the images of a common trajectory (of some suitable mechanical system)? We will see that, in general, no. The reason is the following: consider two relativistic trajectories $\Gamma'$ and $\Gamma''$ joining points $A$ and $B$ of the same configuration space. If $\Gamma'$ and $\Gamma''$ are images by $\varphi'$ and $\varphi''$ of the same trajectory, then they will have the same duration. But that equality is incompatible with the fact that their lengths (= the proper time employed) are, as a rule, different.

The conclusion reached is net: generally, a collection of particles that move relativistically can not be the manifestation of a mechanical system, relativistic or not, at least once the definitions given in this note have been accepted.

\section{Terminology and notation}\label{notation}

In the sequel, $M$ will denote an smooth manifold. Let $\pi\colon TM\to M$ be its tangent bundle; each local chart $\{x^j\}$ on $\U\subseteq M$ defines a local chart $\{x^j,\dot x^j\}$ on $T\U\subseteq TM$ in the usual way.
\medskip

Tangent vectors or tangent fields on $TM$ that are tangent to the fibres of the projection $\pi\colon TM\to M$ will be called \emph{vertical} (so that, as derivations of the ring $\C^\infty(TM)$, annihilate the subring $\C^\infty(M)$). Locally, vertical vector fields (or vectors) looks like $ a^j\partial/\partial\dot x^j$.
\medskip

The differential 1-forms on $TM$ that (by interior product) annihilate all the vertical vectors will be called \emph{horizontal 1-forms}. For each function $f\in\C^\infty(M)$, $df$ is a horizontal 1-form and, locally, any horizontal 1-form is a linear combination of the forms $df$, with coefficients in $\C^\infty(TM)$ or, sums $b_j\,dx^j$.

\subsection{The derivation $\dot d$}
Each horizontal 1-form $\alpha$ on $TM$ defines a function $\dot\alpha$ on $TM$ by the rule $$\dot\alpha(v_a)=\langle\alpha,v_a\rangle,$$ for each $v_a\in TM$
(the symbol $\langle\alpha,v_a\rangle$ denotes the obvious pairing). In particular, for each  $f\in\C^\infty(M)$, the function $\dot{\overline {df}}\in\C^\infty(TM)$ will be denoted, for short, as $\dot f$ (this notation is compatible with the previously mentioned $\dot x^j$). For each tangent vector $v_a\in TM$ we have $\dot f(v_a)=v_a(f)$.
\medskip

The map $\dot d\colon\C^\infty(M)\to\C^\infty(TM)$, $f\mapsto\dot df:=\dot f$ is, essentially, the differential.
For each $f\in\C^\infty(M)$ we have, in coordinates,
$$\dot df=\frac{\partial f}{\partial x^j}\,\dot x^j=\left(\dot x^j\frac{\partial }{\partial x^j}\right)f.$$
Hence, the local expression for the field $\dot d$ (field on $TM$ with values  in $TM$) is
  $$\dot d=\dot x^j\frac{\partial }{\partial x^j}.$$

  By using $\dot d$ we can put also $\dot\alpha=\dot d\lrcorner\alpha$.

  \subsection{The bundle of accelerations. Second order differential equations}\label{ss:accelerations}
  A tangent vector $D_{v_a}\in T_{v_a}(TM)$ is an \emph{acceleration} when, for each $f\in\C^\infty(M)$ is $D_{v_a}f=v_af$; that is to say, when $\pi_*(D_{v_a})=v_a$ ($\pi$ is the tangent bundle projection). A field $D$, tangent to $TM$, is an \emph{second order differential equation} when the value $D_{v_a}$ at each $v_a\in TM$ is an acceleration. This means that, for each $f\in\C^\infty(M)$, we have $Df=\dot df=\dot f$. In local coordinates,
  $$D=\dot x^j\,\frac{\partial}{\partial x^j}+f^j(x,\dot x)\,\frac{\partial}{\partial\dot x^j},$$
  for certain functions $f^j\in\C^\infty(TM)$. In this way, a parameterized curve $x^j=x^j(t), \dot x^j=\dot x^j(t)$ is a solution of $D$ if and only if $x^j(t)=dx^j(t)/dt$ and $\dot x^j(t)=f^j(x(t),dx/dt)$ which can be combined into the system
  $$\frac{d^2 x^j}{dt^2}=f^j(x(t),dx(t)/dt),\quad j=1,\dots,n.$$

  Given two second order differential equations $D$, $\overline D$, the difference $D-\overline D$ is a vertical field. Therefore, second order differential equations are the sections of an affine bundle over $TM$ (the \emph{bundle of accelerations}), modeled over the vector bundle of the vertical tangent fields.
  
  \subsection{Contact forms on $TM$}
  A 1-form $\alpha$ on $TM$ is said a \emph{contact form} when annihilates  all the second order differential equations. Such an $\alpha$ also annihilates the difference of any couple of second order differential equations. Thus, $\alpha$ annihilates every vertical tangent field and, then, is a horizontal 1-form. On the other hand, for each 1-form horizontal $\alpha$ and acceleration $D_{v_a}$, we have
    $$\langle\alpha,D_{v_a}\rangle=\langle\alpha,v_a\rangle=\dot\alpha(v_a),$$
 so that $\alpha$ will be a contact form if and only if $\alpha$ is horizontal and, in addition, $\dot\alpha=\langle\alpha,\dot d\rangle=0$.

 The set of contact 1-forms is a Pfaff system on $TM$, the \emph{contact system} $\Omega$, of rank $n-1$ (if $n=\dim M$), generated out of the 0-section of $TM$, by the 1-forms
 $$\dot x^j\,dx^i-\dot x^i\, dx^j,\quad (i,j=1,\dots,n),$$
 on each coordinated open set.

  For each $v_a\in TM$, the tangent vectors on $TM$ at $v_a$ annihilated by $\Omega$ are the multiples of accelerations at $v_a$, along with the vertical vectors at $v_a$. A curve $\Gamma$ in $TM$ is a solution of the contact system if it is tangent at each point to an acceleration or a (non trivial) vertical vector. If $\Gamma$ is not vertical, then necessarily has locally the following coordinated expression
  $$x^i=x^i(t),\qquad \dot x^i=\frac{dx^i(t)}{dt},\qquad i=1,\dots,n.$$
  Therefore $\Gamma$ is the so called lifting of the parameterized curve  $\pi(\Gamma)=\{x^i=x^i(t)\}$ to the tangent bundle.

  \bigskip

\section{Absolute time}\label{s:absolutetime}

On the space $TM$ there is no function that can do the role of ``time'': a valid parameter for all the trajectories. In fact, such a parameter, say $g$, should hold $Dg=1$ for every second order equation $D$ and this is impossible (see \cite{TiempoMunozAlonso} for details).

Despite the above, there is  a certain class of differential 1-forms that can play that role:
\begin{defi}
We will say that an 1-form $\alpha$ belongs to the \emph{class of time} on an open set $\U$ of $TM$ when is horizontal and $\dot\alpha=1$ on $\U$.
\end{defi}
If $\alpha$, $\beta$ belong to the class of time on $\U$, then $\alpha-\beta\in\Omega$ on $\U$: two 1-forms in the class of time are congruent modulo the contact system. For each function $f\in\C^\infty(M)$, on the open set of $TM$ where $\dot f\ne 0$, the form $df/\dot f$ belongs to the class of time. It is derived that each point $v_a\in TM$, out of the 0-section, has a neighborhood in which there is a form in the class of time: the class of time is defined all along $TM$ except the 0-section. In fact, by using partitions of the unity can be showed that there is a global form $\ttau$ in the class of time on the complementary open set of the  0-section in $TM$.

 For each curve $\Gamma$ in $TM$, solution of the contact system, and which does not intersect the 0-section, we will call \emph{duration} of $\Gamma$ the integral $\int_\Gamma\ttau$, where $\ttau$ is an 1-form in the class of time.

 The ``absolute'' character of the duration is a consequence of the functoriality of the notion of acceleration:
 \begin{lemma}
 If $\varphi\colon M\to N$ is a morphism of smooth manifolds, and $\varphi_*\colon TM\to TN$ is the corresponding morphism between their tangent bundles, the tangent map $$\varphi_{**}\colon T(TM)\to T(TN)$$ sends accelerations in $TM$ to accelerations in $TN$.
 \end{lemma}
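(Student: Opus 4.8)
The plan is to reduce the statement to the naturality of the bundle projection under the tangent functor. First I would record the reformulation of the acceleration condition that is already implicit in Section~\ref{ss:accelerations}: writing $\pi_M\colon TM\to M$ and $\pi_N\colon TN\to N$ for the two projections, a vector $D_{v_a}\in T_{v_a}(TM)$ is an acceleration if and only if $(\pi_M)_*(D_{v_a})=v_a$. Indeed, the defining requirement $D_{v_a}f=v_af$ for every $f\in\C^\infty(M)$ (where $f$ is read on $TM$ through $\pi_M$) is exactly the assertion $(\pi_M)_*D_{v_a}=v_a$. This is the only characterization of ``acceleration'' I will use.

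Next I would exploit the fact that $\varphi_*$ covers $\varphi$, i.e. the commutative square
$$\pi_N\circ\varphi_*=\varphi\circ\pi_M.$$
Applying the (covariant) tangent functor to both sides, and using that taking tangent maps respects composition, yields the identity
$$(\pi_N)_*\circ\varphi_{**}=\varphi_*\circ(\pi_M)_*$$
as maps $T(TM)\to TN$. With this in hand the conclusion is immediate. Given an acceleration $D_{v_a}$, the vector $\varphi_{**}(D_{v_a})$ lives in $T(TN)$ based at $\varphi_*(v_a)\in TN$, since $\varphi_{**}$ covers $\varphi_*$. Evaluating the displayed identity on $D_{v_a}$ and invoking the reformulated hypothesis gives
$$(\pi_N)_*\bigl(\varphi_{**}(D_{v_a})\bigr)=\varphi_*\bigl((\pi_M)_*D_{v_a}\bigr)=\varphi_*(v_a),$$
which is precisely the condition for $\varphi_{**}(D_{v_a})$ to be an acceleration at the point $\varphi_*(v_a)$.

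I do not anticipate a serious obstacle here; the argument is pure functoriality. The only points requiring care are the equivalence between the functional definition $D_{v_a}f=v_af$ and its projection form $(\pi_M)_*D_{v_a}=v_a$, and the bookkeeping of base points, so that the acceleration condition for the image is read at $\varphi_*(v_a)$ and not at some other point of $TN$. Should one prefer to avoid the functorial language, the same result can be obtained by a direct computation on functions: for $g\in\C^\infty(N)$ one has, using $\pi_N\circ\varphi_*=\varphi\circ\pi_M$ and the acceleration hypothesis,
$$\varphi_{**}(D_{v_a})(g\circ\pi_N)=D_{v_a}\bigl((g\circ\varphi)\circ\pi_M\bigr)=v_a(g\circ\varphi)=\bigl(\varphi_*v_a\bigr)(g),$$
which again identifies $\varphi_{**}(D_{v_a})$ as an acceleration over $\varphi_*(v_a)$.
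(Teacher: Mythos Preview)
Your argument is correct and is essentially identical to the paper's own proof: both differentiate the square $\pi_N\circ\varphi_*=\varphi\circ\pi_M$ to obtain $(\pi_N)_*\circ\varphi_{**}=\varphi_*\circ(\pi_M)_*$, and then evaluate on an acceleration. The only difference is cosmetic --- you spell out the equivalence between the functional and projection formulations and append an alternative computation on functions, whereas the paper presents the two commutative diagrams side by side.
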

 \begin{proof}
 See \cite{TiempoMunozAlonso}. For each $v_a\in TM$, by differentiating the commutative diagram
 $$
 \xymatrix{ TM\ar[r]^-{\varphi_*}\ar[d]_-\pi & TN\ar[d]^-{\pi}
     & {}\ar@{}[d]^-{\displaystyle{\text{we get}}} & & T_{v_a}(TM)\ar[r]^-{\varphi_{**}}\ar[d]_-{\pi_*} & T_{\varphi_*(v_a)}(TN)\ar[d]^-{\pi_*}\\
 M\ar[r]_-\varphi & N & & & T_aM\ar[r]_-{\varphi_*} & T_{\varphi(a)}N}
 $$

 Now, if $D_{v_a}$ is an acceleration at $v_a\in TM$ (so that $\pi_*D_{v_a}=v_a$), then
 $$\pi_*(\varphi_{**}(D_{v_a}))=\varphi_*\circ \pi_*\,(D_{v_a})=\varphi_*(v_a),$$
 which entails that $\varphi_{**}(D_{v_a})$ is an acceleration at $\varphi_*(v_a)$.
  \end{proof}

\begin{cor}
$(\varphi_*)^*$ applies the contact system of $N$, $\Omega_N$, into the contact system of $M$, $\Omega_M$. $\varphi_*$ applies curves $\Gamma$ solution of $\Omega_M$ into curves $\varphi_*(\Gamma)$ solution of $\Omega_N$.
\end{cor}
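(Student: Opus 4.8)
The plan is to reduce both assertions to the single structural fact supplied by the Lemma: that $\varphi_{**}$ carries accelerations to accelerations. First I would record a reformulation already implicit in Section \ref{ss:accelerations}, namely that a $1$-form is a contact form exactly when it annihilates every acceleration. Indeed, at a point $v_a$ the accelerations form an affine subspace $\{D_{v_a}:\pi_*D_{v_a}=v_a\}$ whose linear span is precisely the contact distribution $\mathbb{R}\,D_{v_a}+\{\text{verticals}\}$; a covector that kills $D_{v_a}$ and every $D_{v_a}+V$ (with $V$ vertical) kills each such $V$ as well, and so automatically kills the whole distribution. Thus annihilating accelerations forces horizontality for free, and I will not need to verify that condition separately.

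For the first claim, let $\omega\in\Omega_N$ and fix an arbitrary acceleration $D_{v_a}$ at a point $v_a\in TM$. By the very definition of the pullback,
$$\langle(\varphi_*)^*\omega,\,D_{v_a}\rangle=\langle\omega,\,\varphi_{**}(D_{v_a})\rangle.$$
The Lemma asserts that $\varphi_{**}(D_{v_a})$ is an acceleration at $\varphi_*(v_a)\in TN$, and $\omega$, being a contact form, annihilates it; hence the right-hand side vanishes. Since $v_a$ and $D_{v_a}$ were arbitrary, $(\varphi_*)^*\omega$ annihilates every acceleration of $TM$ and therefore belongs to $\Omega_M$.

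The second claim I would then deduce from the first by restricting forms to the curve. A curve $\Gamma$ solves $\Omega_M$ exactly when $\Gamma^*\eta=0$ for every $\eta\in\Omega_M$, i.e. when its velocity lies in the contact distribution at each point. Given $\omega\in\Omega_N$, functoriality of the pullback gives
$$(\varphi_*\Gamma)^*\omega=\Gamma^*\big((\varphi_*)^*\omega\big),$$
and since $(\varphi_*)^*\omega\in\Omega_M$ by the first claim, this restriction vanishes because $\Gamma$ solves $\Omega_M$. Letting $\omega$ range over $\Omega_N$ shows that $\varphi_*\Gamma$ is a solution of $\Omega_N$. (One could equally argue pointwise: at each point the velocity of $\Gamma$ is a multiple of an acceleration, whose image is again such by the Lemma, or a vertical vector, whose image is vertical because $\pi_*\varphi_{**}=\varphi_*\pi_*$ sends it to $0$.)

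I do not anticipate a genuine obstacle. The only step needing care is the reformulation of the first paragraph --- checking that vanishing on the affine family of accelerations is equivalent to membership in the contact system, so that horizontality is automatic --- after which everything follows from the Lemma together with the elementary pairing identity $\langle(\varphi_*)^*\omega,X\rangle=\langle\omega,\varphi_{**}X\rangle$.
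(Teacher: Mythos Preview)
Your proposal is correct. The paper states this corollary without proof, treating it as an immediate consequence of the Lemma; your argument is precisely the natural elaboration of that implication, using the pointwise characterization of contact forms already set up in the paper and the pairing identity for pullbacks.
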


\begin{cor}
For each 1-form $\tau$ in the class of time on an open set $\U$ of $TN$, the form $(\varphi_*)^*(\tau)$ is in the class of time on the open set $\varphi^{-1}(\U)$ of $TM$.
\end{cor}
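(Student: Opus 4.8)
The plan is to check that $(\varphi_*)^{*}(\tau)$ meets the two defining requirements of the class of time on $(\varphi_*)^{-1}(\U)$ (the set denoted $\varphi^{-1}(\U)$ in the statement): that it is horizontal, and that its associated function equals $1$. To streamline notation I would write $\psi:=\varphi_*\colon TM\to TN$, so that $\psi_*=\varphi_{**}$ and $(\varphi_*)^{*}(\tau)=\psi^{*}\tau$, and I would lean throughout on the elementary pullback identity $\langle\psi^{*}\tau,w\rangle=\langle\tau,\psi_*w\rangle$, valid for each tangent vector $w$ to $TM$.

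First I would dispatch horizontality. Let $V$ be a vertical vector at a point $v_a\in TM$, so $\pi_*V=0$. The very commutative diagram used in the proof of the Lemma gives $\pi_*(\psi_*V)=\varphi_*(\pi_*V)=0$, so $\psi_*V$ is vertical on $TN$; since $\tau$ is horizontal, $\langle\psi^{*}\tau,V\rangle=\langle\tau,\psi_*V\rangle=0$. As $v_a$ and $V$ are arbitrary, $\psi^{*}\tau$ annihilates every vertical vector and is therefore horizontal.

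Then I would compute the function $\dot{(\psi^{*}\tau)}$. Here the bridge is the identity recorded in the subsection on contact forms: for any horizontal $1$-form $\alpha$ and any acceleration $D_{v_a}$ one has $\langle\alpha,D_{v_a}\rangle=\dot\alpha(v_a)$; equivalently, lying in the class of time amounts to being horizontal and pairing to $1$ with every acceleration. Fix $v_a\in(\varphi_*)^{-1}(\U)$ and take any acceleration $D_{v_a}$ at $v_a$ (for instance $\dot d_{v_a}$, which is an acceleration because $\pi_*\dot d_{v_a}=v_a$). By the Lemma, $\psi_*D_{v_a}=\varphi_{**}D_{v_a}$ is an acceleration at $\psi(v_a)=\varphi_*(v_a)\in\U$, whence
$$\dot{(\psi^{*}\tau)}(v_a)=\langle\psi^{*}\tau,D_{v_a}\rangle=\langle\tau,\varphi_{**}D_{v_a}\rangle=\dot\tau(\varphi_*(v_a))=1,$$
the final equality holding because $\tau$ is in the class of time on $\U$ and $\varphi_*(v_a)\in\U$. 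Since $v_a$ was arbitrary, $\dot{(\psi^{*}\tau)}\equiv 1$ on $(\varphi_*)^{-1}(\U)$, and together with horizontality this places $(\varphi_*)^{*}(\tau)$ in the class of time there.

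I do not anticipate a real obstacle: the corollary is in essence a naturality statement, and once the pullback identity and the functoriality of accelerations (the Lemma) are in hand, both conditions fall out mechanically. The only point demanding slight care is that the Lemma speaks of accelerations and so directly yields the condition $\dot{(\psi^{*}\tau)}=1$, but not horizontality; one must separately observe --- from the same diagram --- that $\varphi_{**}$ sends vertical vectors to vertical vectors.
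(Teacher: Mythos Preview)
Your proof is correct and follows exactly the route the paper intends: the corollary is stated without proof, as an immediate consequence of the Lemma on functoriality of accelerations, and you have simply spelled out the two verifications (horizontality via $\varphi_{**}$ preserving verticals, and $\dot{(\psi^{*}\tau)}=1$ via $\varphi_{**}$ preserving accelerations) that the paper leaves implicit. Your observation that the domain should properly be written $(\varphi_*)^{-1}(\U)$ rather than $\varphi^{-1}(\U)$ is also well taken.
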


\begin{cor}[``Absolute'' character of the duration]\label{absolutetime}
Let $\Gamma$ be a curve solution of $\Omega_M$, whose image $\varphi_*(\Gamma)$ does not intersect the 0-section of $TM$. The duration of $\varphi_*(\Gamma)$ is the same as the duration of $\Gamma$.
\end{cor}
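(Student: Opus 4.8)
The plan is to express both durations as line integrals of class-of-time forms and to connect them through the pullback $(\varphi_*)^*$. I would take the global 1-form $\tau$ in the class of time defined on the complement $\U$ of the 0-section in $TN$ (whose existence was recalled just before the definition of duration); since by hypothesis $\varphi_*(\Gamma)$ avoids the 0-section, $\varphi_*(\Gamma)\subseteq\U$, and the duration of $\varphi_*(\Gamma)$ is, by definition, $\int_{\varphi_*(\Gamma)}\tau$.

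The next step is the functoriality of the integral of a 1-form along a parameterized curve: if $\gamma$ parameterizes $\Gamma$, then $\varphi_*\circ\gamma$ parameterizes $\varphi_*(\Gamma)$, and the chain rule gives $(\varphi_*\circ\gamma)^*\tau=\gamma^*\big((\varphi_*)^*\tau\big)$; integrating over the common parameter interval yields $\int_{\varphi_*(\Gamma)}\tau=\int_\Gamma(\varphi_*)^*\tau$. Now the key input is the preceding corollary, by which $\sigma:=(\varphi_*)^*\tau$ belongs to the class of time on $(\varphi_*)^{-1}(\U)$; since $\varphi_*(\Gamma)\subseteq\U$, this open set contains $\Gamma$. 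Because any two class-of-time forms differ by a contact form and contact forms restrict to zero along curves solution of $\Omega_M$, the integral $\int_\Gamma\sigma$ computes the duration of $\Gamma$. Chaining the equalities $\int_{\varphi_*(\Gamma)}\tau=\int_\Gamma(\varphi_*)^*\tau=\int_\Gamma\sigma$ then identifies the two durations.

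The step needing the most care is the well-definedness of both durations, namely that $\Gamma$ itself avoids the 0-section of $TM$. This follows because $\varphi_*$ sends the 0-section of $TM$ into that of $TN$ (as $\varphi_*(0_a)=0_{\varphi(a)}$), so the hypothesis $\varphi_*(\Gamma)\cap\{\text{0-section}\}=\varnothing$ forces $\Gamma\cap\{\text{0-section}\}=\varnothing$; this is also what guarantees $\Gamma\subseteq(\varphi_*)^{-1}(\U)$, so that $\sigma$ is genuinely a class-of-time form along $\Gamma$. A minor subtlety is that $\varphi_*$ need not be injective, so $\varphi_*(\Gamma)$ may self-intersect, but this is harmless, since the duration is defined via the parameterization and the pullback identity above does not use injectivity. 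Everything else rests on the two preceding corollaries — that $\varphi_*$ maps solutions of $\Omega_M$ to solutions of $\Omega_N$, making $\varphi_*(\Gamma)$ an admissible curve, and that $(\varphi_*)^*$ preserves the class of time — together with the standard formula for line integrals under pullback.
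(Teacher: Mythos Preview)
Your argument is correct and is precisely the one the paper has in mind: the corollary is stated without proof because it follows immediately from the two preceding corollaries via the pullback formula for line integrals, exactly as you spell out. Your attention to the 0-section issue and to the independence of the integral from the particular class-of-time representative fills in the details the paper leaves implicit.
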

\medskip

Assume that the entire Universe is the trajectory of a second order differential equation $(M, D)$. ``Forgetting degrees of freedom'' of $M$ by means of a suitable projection we will obtain a new second order differential equation on an smaller configuration space. According to the above results, it would be enough to measure the duration of time following the evolution of that smaller system. And it does not matter which we choose: we will always measure the same duration. It is in that sense that we can talk about absolute time.

\section{Mechanical systems}\label{s:mechanicalsystems}

In this section we will review the Classical Mechanics as exposed in \cite{MecanicaMunoz} and \cite{RM}.
We will continue with the notation given in Section \ref{notation}.
\medskip

Newton Mechanics is based on a link between forces and accelerations. But, for forces to produce accelerations it is necessary to have on $M$ a riemannian metric of arbitrary signature (i.e.: pseudo-Riemannian metrics, included, for instance, the Lorentzian ones).
\medskip

\subsection{Liouville form. Kinetic energy. Length element}
Let $T^*M$ be the cotangent bundle of $M$. If $\U\subset M$ is an open set coordinated by $\{x^i\}$, we define, as usual, the functions $p_i$ on $T^*\U\subset T^*M$ by the rule $p_i(\alpha_a):=\alpha_a(\partial/\partial x^i)$ for each $\alpha_a\in T^*\U$. The functions $\{x^i,p_i\}$, $i=1,\dots,n$, are  coordinates on $T^*\U$.

The \emph{Liouville form} $\theta$ is defined by
$$\theta_{\alpha_a}=\pi^*\alpha_a,\quad \text{for each}\,\,\alpha_a\in T^*M,$$
where $\pi^*$ denotes the pullback by the projection $T^*M\to M$.
Its exterior differential, $d\theta=\omega_2$, is the canonical \emph{symplectic form} in $T^*M$. Their expressions in local coordinates are $$\theta=p_i\,dx^i,\qquad
\omega_2=dp_i\wedge dx^i.$$

Given a non-singular metric $T_2$ on $M$ of arbitrary signature (pseudo-Riemannian metric), we dispose of an isomorphism $TM\simeq T^*M$ which will be tacitly assumed from now on. Thus, thanks to the metric $T_2$ on $M$, we will talk about the Liouville form $\theta$ in $TM$, which will be given by $\theta_{v_a}=v_a\lrcorner\,T_2$ (pulled-back from $M$ to $TM$).

 If, in local coordinates,  $T_2=g_{ij}\,dx^i dx^j$, the isomorphism between $TM$ and $T^*M$ is expressed by
  $$p_i=g_{ij}\dot x^j\quad\text{or}\quad \dot x^j=g^{ij}p_i,$$
   where $g^{ij}$ denote the entry $(i,j)$ of the inverse matrix of $(g_{ij})$. In particular,
  $$\theta=g_{ij}\dot x^i\,dx^j$$
 in $TM$.
 \medskip

 The function $T:=(1/2)\dot\theta=(1/2)g_{ij}\dot x^i\dot x^j$ is the \emph{kinetic energy}.
  \medskip

On the other hand, we can define intrinsically the \emph{length element} as the differential 1-form
 $$\frac{\theta}{\|\dot\theta\|}=\frac{\theta}{\sqrt{|\dot\theta|}},$$
which has a sense on the open subset of $TM$ where the kinetic energy does not vanish $\dot\theta\ne 0$.

 Let us denote by $d\tau$ the classical length element: it  is the restriction  of the length element (in the above sense) to the curve in $TM$ describing the lifting $\Gamma$ of the corresponding
parameterized curve $\gamma$ in $M$. If $\gamma$ is locally $x^i=x^i(t)$, then
$$d\tau=\frac{\theta}{\sqrt{|\dot\theta|}}\left.\right|_\Gamma=
   \frac{{g_{ij}\frac{dx^i}{dt}\frac{dx^j}{dt}}}{\sqrt{|g_{ij}\frac{dx^i}{dt}\frac{dx^j}{dt}}|}\,dt=
\sqrt{\left|g_{ij}\frac{dx^i}{dt}\frac{dx^j}{dt}\right|}\,dt,$$
which matches with the usual coordinate expression.

 When we have a curve $\C\subseteq M$, each of its possible parameterizations $\gamma$ defines a (generally) different lifting $\Gamma$ to $TM$.
 It turns out that integral of $\theta/\sqrt{|\dot\theta|}$ along $\Gamma$ does not
depend on the parametrization, and this is the reason for which it
has sense  to talking about the length of a curve $\C$.
\medskip

\subsection{The Fundamental Lema. Mechanical systems}

 The entirety of the Mechanics rests on the following
\begin{lemma}[Fundamental Lemma of Classical Mechanics]
The metric $T_2$ establishes a univocal correspondence between second order differential equations $D$ on $M$ and horizontal 1-forms $\alpha$ in $TM$, by means of the following equation
\begin{equation}\label{Newton}
D\lrcorner\,\omega_2+dT+\alpha=0.
\end{equation}
The tangent fields $u$ on $M$ that are intermediate integrals of $D$ are precisely those holding
\begin{equation}\label{NewtonIntermedia}
u\lrcorner\, d(u\lrcorner\,T_2)+dT(u)+u^*\alpha=0, 
\end{equation}
where $u^*\alpha$ is the pull-back of $\alpha$ by means of the section $u\colon M\to TM$ and $T(u)$ is the function $T$ specialized to $u$.
\end{lemma}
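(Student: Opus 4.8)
The plan is to treat the two assertions separately, both resting on the non-degeneracy of the symplectic form $\omega_2$. For the correspondence, I would first observe that, since $\omega_2$ is non-degenerate, the contraction $X\mapsto X\lrcorner\omega_2$ is an isomorphism between tangent fields on $TM$ and $1$-forms on $TM$; hence $X\mapsto -(X\lrcorner\omega_2+dT)$ is a bijection from all tangent fields onto all $1$-forms, and it suffices to prove that under it the second order differential equations correspond exactly to the horizontal $1$-forms. The crux is one pointwise computation: evaluating $X\lrcorner\omega_2+dT$ on an arbitrary vertical field $V=a^k\,\partial/\partial\dot x^k$. Using $\omega_2=dp_i\wedge dx^i$ and $T=\tfrac12 g_{ij}\dot x^i\dot x^j$ I expect
\[
\langle X\lrcorner\omega_2+dT,\,V\rangle=a^k g_{ki}\bigl(\dot x^i-\langle dx^i,X\rangle\bigr).
\]
As $(g_{ki})$ is invertible, this vanishes for every vertical $V$ if and only if $\langle dx^i,X\rangle=\dot x^i$ for all $i$, i.e.\ if and only if $X$ is a second order differential equation. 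This identifies the horizontal $1$-forms with the image of the second order differential equations and yields \eqref{Newton}.

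For the second assertion I would first pin down the meaning of intermediate integral: a field $u\colon M\to TM$ is an intermediate integral of $D$ when every integral curve of $u$, lifted to $TM$, solves $D$; tracing the lift shows this is equivalent to the identity $D\circ u=u_*u$ of maps $M\to T(TM)$, i.e.\ to the vanishing of $V:=D\circ u-u_*u$, which is vertical because $\pi_*\circ D\circ u=u=\pi_*\circ u_*u$. The key step is then to pull back \eqref{Newton} by the section $u$. Since pullback commutes with $d$ one has $u^*(dT)=d\bigl(T(u)\bigr)$, and since $\theta_{v_a}=v_a\lrcorner T_2$ gives $u^*\theta=u\lrcorner T_2$, the relation $\omega_2=d\theta$ lets me write $D\circ u=u_*u+V$ and split
\[
u^*(D\lrcorner\omega_2)=u\lrcorner d(u\lrcorner T_2)+C_V,\qquad C_V(X):=\omega_2\bigl(V,\,u_*X\bigr),
\]
using $\omega_2(u_*u,u_*X)=(u^*\omega_2)(u,X)=\langle u\lrcorner d(u\lrcorner T_2),X\rangle$.

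Because applying $u^*$ to \eqref{Newton} always produces $u^*(D\lrcorner\omega_2)+d\bigl(T(u)\bigr)+u^*\alpha=0$, substituting the split above shows that \eqref{NewtonIntermedia} holds exactly when the correction $C_V$ vanishes. It then remains to see that $C_V=0$ forces $V=0$: for $V=a^k\,\partial/\partial\dot x^k$ and $X=X^j\,\partial/\partial x^j$ one computes $C_V(X)=a^k g_{ki}X^i$, which vanishes for all $X$ precisely when $a^k g_{ki}=0$, hence (invertibility of $g$) when $V=0$. This closes the equivalence between \eqref{NewtonIntermedia} and the intermediate-integral condition $V=0$. The step I expect to be the most delicate is the derivation of the displayed pullback identity, since it hinges on correctly separating the horizontal-lift part $u_*u$ of $D\circ u$ from its vertical defect and on the compatibility $u^*d\theta=d\,u^*\theta$; the remaining verifications are the two short coordinate computations above, each of which reduces to the non-degeneracy of the metric.
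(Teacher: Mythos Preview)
Your proposal is correct. Both computations check out: for the first assertion, your evaluation of $X\lrcorner\omega_2+dT$ on a vertical vector is accurate and the non-degeneracy of $g$ does the rest; for the second, the decomposition $D\circ u=u_*u+V$ with $V$ vertical, together with $u^*\omega_2=d(u\lrcorner T_2)$, yields exactly the correction term $C_V(X)=g_{ki}a^kX^i$, whose vanishing is equivalent to $V=0$.

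As for comparison with the paper: the paper does not supply a proof of this lemma but simply refers the reader to \cite{MecanicaMunoz,RM}. Your argument is the natural one---reducing the bijection to a pointwise statement via the symplectic isomorphism, and obtaining \eqref{NewtonIntermedia} by pulling \eqref{Newton} back along the section $u$---and is presumably what those references carry out in detail. There is nothing to add.
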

\begin{proof}
See \cite{MecanicaMunoz,RM}.
\end{proof}

The local expression for the second order differential equation is
$$D=\dot x^j\frac{\partial}{\partial x^j}+\ddot x^j\frac{\partial}{\partial\dot x^j},\qquad \text{where}\quad \ddot x^i:=D\dot x^i$$
is
given by
\begin{equation}\label{Newtoncoordenadas2}
\ddot x^j+\Gamma_{k\ell}^j\,\dot x^k\dot x^\ell+\alpha^j=0,
\end{equation}
where $\Gamma_{k\ell}^j$ denotes the Christoffel symbols of the metric
(raising and lowering of indexes will be associated to $T_2$: for instance, $\alpha^j=g^{ij}\alpha_i$).\medskip


  \begin{defi}\label{d:mechanicalsystem}
  A \emph{classical mechanical system} is a manifold $M$ (the \emph{configuration space}) endowed with a pseudo-Riemannian metric $T_2$ and a horizontal 1-form $\alpha$, the \emph{form of work} or \emph{form of force}. The second order differential equation $D$ corresponding with $\alpha$ by (\ref{Newton}) is the \emph{differential equation of the motion} of the system $(M,T_2,\alpha)$ and (\ref{Newton}) is the \emph{Newton equation}. The solution curves of $D$ will be called \emph{trajectories} of the mechanical system.
  \end{defi}

  For $\alpha=0$ we have the \emph{system free of forces} or \emph{geodesic system}, whose equation of motion $D_G$ is the \emph{geodesic field}: ${D_G}\lrcorner\,\omega_2+dT=0$. $D_G$ is the hamiltonian field corresponding to the function $T$ by means of the symplectic form $\omega_2$ associated with the given metric.

  The geodesic field provides an origin for the affine bundle of the second order differential equations (see paragraph \ref{ss:accelerations}). For each second order equation $D$, the difference $D-D_G=V$ is a vertical field, the \emph{force} of the system $(M,T_2,\alpha)$ whose equation of the motion is $D$.
  The force $V$ and the form of force $\alpha$ are related  by
  $V\lrcorner\,\omega_2+\alpha=0$.

\subsection{Conservative systems}\label{subsectionconservatives}

When $\alpha=dU$, for a function $U\in\C^\infty(M)$, the mechanical system $(M,T_2,\alpha)$ is said to be \emph{conservative}.
In this case, equations (\ref{Newtoncoordenadas2}) are
\begin{equation}\label{Newtoncoordenadas3}
g_{i j}\ddot x^j+\Gamma_{k\ell,i}\,\dot x^k\dot x^\ell+\frac{\partial U}{\partial x^i}=0.
\end{equation}

The sum $H:=T+U$ is called \emph{Hamiltonian} of the system. Newton equation  (\ref{Newton}) becomes in this case,
\begin{equation}\label{Newtonconservativo}
D\lrcorner\,\omega_2+dH=0.
\end{equation}

When (\ref{Newtonconservativo}) is written in coordinates $(x,p)$ of $T^*M$, we get the system of Hamilton \emph{canonical equations}. The specialization of $\omega_2$ into each hypersurface $H=\textrm{const.}$ of $T^*M$,  has as radical the field $D$ (and its multiples), as it is derived from (\ref{Newtonconservativo}). From the classical argument based in the Stokes theorem it results, then, the \emph{Maupertuis Principle}: the curves in $M$ with given end points that, lifted to $TM$, remain within the same hypersurface $H=\textrm{const.}$, give values for $\int\theta$ which are extremal just for the trajectory of the system.

The equation (\ref{NewtonIntermedia}) for the intermediate integrals of $D$ is, in this case:
\begin{equation}\label{NewtonconservativoIntermedia}
u\lrcorner\,d(u\lrcorner\,T_2)+dH(u)=0.
\end{equation}

In particular, if $u$ is a Lagrangian submanifold of $TM$, this is to say, if $d(u\lrcorner\,T_2)=d\theta|_u=0$, the equation (\ref{NewtonIntermedia}) is
$$dH(u)=0\quad\text{or}\quad H(u)=\textrm{const.}$$
This is the \emph{Hamilton-Jacobi equation}, when is translated to the function $S$ of which $u$ is the gradient:
\begin{equation}\label{Hamilton-Jacobi}
H(\textrm{grad}\,S)=\textrm{const.}\,\,\text{in $TM$, or}\quad H(dS)=\textrm{const.} \,\,\text{in $T^*M$,}
\end{equation}
or, in coordinates,
$$\frac 12\,g^{jk}\,\frac{\partial S}{\partial x^j}\frac{\partial S}{\partial x^k}+U=0.$$
\medskip

\section{Relativistic mechanical systems}\label{ss:relativisticmechanical}

In Special Relativity, the kinetic energy or, equivalently, the length of the vector velocity along a given trajectory stays constant. So, it is natural to give the following
\begin{defi}\label{DefRelativista}
A \emph{relativistic field} on $(M,T_2)$ is a second order
differential equation $D$ such that $DT=0$, where $T=(1/2)\dot\theta$ is the
kinetic energy associated with $T_2$. In the same way, the mechanical system associated with such a  field $D$ will be said to be a \emph{relativistic mechanical system}.
\end{defi}

The above condition, slightly unexpectedly,  does not depend on the metric (see \cite{RelatividadMunoz,RM}). In fact, it holds
\begin{thm}\label{criterio}
A second order differential equation $D$ on $(M,T_2)$ is a
relativistic field if and only if the work form $\alpha$
associated by virtue of the Newton-Lagrange law
($i_D\omega_2+dT+\alpha=0$) belongs to the contact system $\Omega$
of $TM$. That is equivalent to $\dot\alpha=0$.
\end{thm}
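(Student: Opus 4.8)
The plan is to prove the equivalence $DT=0 \iff \alpha\in\Omega$ directly from the Newton equation $i_D\omega_2+dT+\alpha=0$, using the relation $\langle\alpha,D\rangle=\dot\alpha$ that holds for any horizontal $\alpha$ and any acceleration (established in the subsection on contact forms). First I would contract the Newton equation with the field $D$ itself. Since $\omega_2$ is a $2$-form, $i_D\omega_2$ is a $1$-form and $\langle i_D\omega_2,D\rangle=\omega_2(D,D)=0$ by antisymmetry. Contracting the whole equation with $D$ therefore yields
\begin{equation*}
\langle dT,D\rangle+\langle\alpha,D\rangle=0,\qquad\text{i.e.}\qquad DT+\langle\alpha,D\rangle=0.
\end{equation*}
The key observation is now that $\alpha$ is horizontal (it is the form of force), and $D$ is a second order differential equation, so at each $v_a$ the vector $D_{v_a}$ is an acceleration; by the identity $\langle\alpha,D_{v_a}\rangle=\dot\alpha(v_a)$ recorded earlier, we get $\langle\alpha,D\rangle=\dot\alpha$. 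Substituting gives the clean relation
\begin{equation*}
DT=-\dot\alpha.
\end{equation*}

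From this single identity both directions follow immediately. If $D$ is relativistic, then $DT=0$, hence $\dot\alpha=0$; and since $\alpha$ is already horizontal, the characterization of contact forms from the subsection on contact forms ($\alpha\in\Omega$ iff $\alpha$ is horizontal and $\dot\alpha=0$) shows $\alpha\in\Omega$. Conversely, if $\alpha\in\Omega$ then in particular $\dot\alpha=0$, so $DT=-\dot\alpha=0$, which is exactly the condition that $D$ be a relativistic field. This also makes transparent the parenthetical claim in the statement that the condition ``$\alpha\in\Omega$'' is equivalent to ``$\dot\alpha=0$'': for a horizontal form the extra vanishing-on-verticals requirement is automatic, so membership in the contact system reduces to the single scalar equation $\dot\alpha=0$.

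I would then address the assertion that the relativistic condition is metric-independent. The metric enters only through $T=\tfrac12\dot\theta$ and through $\omega_2$, which together define the correspondence $D\leftrightarrow\alpha$. The point is that the conclusion $\alpha\in\Omega$ is a statement purely about the contact system $\Omega$ of $TM$, which is defined intrinsically from the smooth structure (generated by the forms $\dot x^j\,dx^i-\dot x^i\,dx^j$) with no reference to any metric. Thus, although $\alpha$ itself depends on the choice of $T_2$, the property of being a relativistic field—encoded equivalently as $\dot\alpha=0$—is independent of it.

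The computation here is short, so there is no single hard obstacle; the one step to state carefully is the contraction $\omega_2(D,D)=0$, which relies only on antisymmetry of $\omega_2$ and is valid regardless of whether $D$ is a genuine second order equation. The substantive structural input is the prior lemma identifying $\langle\alpha,D_{v_a}\rangle$ with $\dot\alpha(v_a)$ for horizontal $\alpha$, which is what converts the dynamical condition $DT=0$ into the purely kinematic, metric-free condition $\dot\alpha=0$.
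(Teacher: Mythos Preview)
Your proof of the equivalence $DT=0\iff\dot\alpha=0\iff\alpha\in\Omega$ is correct: contracting the Newton equation with $D$ kills the symplectic term by antisymmetry, and the identity $\langle\alpha,D\rangle=\dot\alpha$ for horizontal $\alpha$ (recorded in the subsection on contact forms) reduces everything to the single relation $DT=-\dot\alpha$, from which both directions are immediate. The paper does not actually supply a proof of this theorem in the text; it only states the result and cites the external references \cite{RelatividadMunoz,RM}. Your argument is the natural one and is presumably what those references contain, so there is nothing to compare against here.

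One caveat concerns your closing paragraph on metric-independence. You observe, correctly, that the contact system $\Omega$ is defined without any metric. But the force form $\alpha$ associated with a \emph{fixed} $D$ depends on $T_2$ through both $\omega_2$ and $T$; hence the condition ``$\alpha\in\Omega$'' can in principle hold for one metric and fail for another (for instance, the Euclidean geodesic field on $\R^2$ has $\alpha=0$ for the flat metric but acquires a nonzero $\dot\alpha$ once one passes to a conformally rescaled metric such as $e^{2x}dx^2+dy^2$). So the sentence ``the property of being a relativistic field\ldots is independent of it'' is not actually justified by what you wrote. Whatever precise content the paper intends by its parenthetical remark on metric-independence, it is not a corollary of the equivalence you proved and requires the cited references to unpack. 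This does not affect the correctness of your proof of the theorem itself, which is complete as stated.
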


 \begin{obs}
 \emph{Relativistic correction}: It is always possible to make a change in the force in the direction of the trajectories so that a given mechanical system becomes relativistic \cite{RM}: it is enough change the force form $\alpha$ by $\widetilde\alpha:=\alpha-(\dot\alpha/\dot\theta)\theta$.
 \end{obs}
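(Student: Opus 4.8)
The plan is to invoke Theorem \ref{criterio}. Since the metric $T_2$ is left untouched, the modified system $(M,T_2,\widetilde\alpha)$ is relativistic precisely when its force form $\widetilde\alpha$ lies in the contact system $\Omega$, and by that theorem this is in turn equivalent to the single scalar condition $\dot{\widetilde\alpha}=0$. So everything reduces to computing the function $\dot{\widetilde\alpha}=\dot d\lrcorner\widetilde\alpha$ on the open set of $TM$ where the correction is defined, and checking that it vanishes identically.

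First I would record that $\widetilde\alpha$ is again a well defined horizontal $1$-form. The Liouville form $\theta=g_{ij}\dot x^i\,dx^j$ is horizontal, the given $\alpha$ is horizontal by hypothesis, and the coefficient $\dot\alpha/\dot\theta$ is a genuine function of $\C^\infty(TM)$ on the open set where $\dot\theta=2T\ne 0$; hence $\widetilde\alpha=\alpha-(\dot\alpha/\dot\theta)\,\theta$ is horizontal there. This locus is exactly the domain on which the length element (proper time) has a meaning, so the restriction costs nothing and the qualification \emph{always} should be read relative to it.

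The key step is then a one-line contraction, using that the pairing $\langle\,\cdot\,,\dot d\rangle$ with the fixed field $\dot d$ is $\C^\infty(TM)$-linear (tensorial) in its form argument, together with $\dot d\lrcorner\theta=\dot\theta$ by the very definition of $\dot{(\;)}$. Writing $f:=\dot\alpha/\dot\theta\in\C^\infty(TM)$ I would compute
\begin{equation*}
\dot{\widetilde\alpha}=\dot d\lrcorner\bigl(\alpha-f\,\theta\bigr)=\dot d\lrcorner\alpha-f\,(\dot d\lrcorner\theta)=\dot\alpha-\frac{\dot\alpha}{\dot\theta}\,\dot\theta=0 .
\end{equation*}
By Theorem \ref{criterio} this yields $\widetilde\alpha\in\Omega$, so $(M,T_2,\widetilde\alpha)$ is a relativistic mechanical system, which is the assertion.

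There is no real obstacle here: the cancellation is forced once one notices that the subtracted term is precisely the multiple of $\theta$ whose image under $\dot{(\;)}$ matches $\dot\alpha$. The only points deserving care are the $\C^\infty(TM)$-linearity of the contraction and the restriction to the locus $\dot\theta\neq0$. Finally, to justify the phrase \emph{change in the direction of the trajectories}, I would remark that $\theta=v_a\lrcorner\,T_2$ is the $1$-form metrically dual to the velocity; hence the added piece $-(\dot\alpha/\dot\theta)\,\theta$ represents a force aligned with the velocity, i.e. along the direction of motion, which is exactly what one expects of a correction that merely regulates the speed so as to keep $T$ constant.
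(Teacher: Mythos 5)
Your proof is correct and is exactly the argument the paper intends: the remark itself offers no proof beyond citing \cite{RM}, but the expected verification is precisely your one-line computation $\dot{\widetilde\alpha}=\dot\alpha-(\dot\alpha/\dot\theta)\,\dot\theta=0$ (using the $\C^\infty(TM)$-linearity of $\dot d\lrcorner$ and $\dot d\lrcorner\theta=\dot\theta$) combined with Theorem \ref{criterio}, the same mechanism the paper uses for Lorentz-type forces. Your added care about horizontality, the restriction to the locus $\dot\theta\ne 0$, and the interpretation of the $\theta$-term as a force along the velocity are all consistent with the paper's framework.
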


 Relativistic fields are the natural generalization of the geodesic
 field $D_G$, the one corresponding with $\alpha=0$ (in the
 contact system).
 \medskip

 Another important example of relativistic system is given by  \emph{Lorentz-type forces}: Let $F$ be a differential 2-form on a manifold $M$ and
let us take
$$\alpha={\dot d}\lrcorner F;$$
in coordinates, $$\alpha=F_{ij}(\dot x^idx^j-\dot x^jdx^i)$$ if $F=F_{ij}dx^i\wedge dx^j$. Since $\dot\alpha=\dot d\lrcorner\alpha=\dot d\lrcorner\dot d\lrcorner F=0$, the mechanical system $(M,T_2,{\dot d}\lrcorner F)$ is relativistic for arbitrary metric $T_2$.
\medskip

On the contrary, conservative systems (except the geodesic case) can never be relativistic: for no non-constant function $U\in\mathcal C^\infty(M)$, the 1-form $\alpha=dU$ will belong to the contact system because $\dot\alpha=\dot U=\dot x^j\partial U/\partial x^j\ne 0$.
\medskip

Let us consider a relativistic mechanical system $(M,T_2,\alpha)$ and so, along each of its solutions, $\dot\theta=2T$ keeps constant. In addition length element is
$d\tau:=\theta/\sqrt{|\dot\theta|}$ and a representative of the time class is $\theta/\dot\theta$ (assumed that $\dot\theta\ne 0$).
As a consequence, along a solution parameterized by $t$, the element of length is given by:
$$d\tau=\frac{\theta}{\sqrt{|\dot\theta|}}=\sqrt{|\dot\theta|}\,\,\frac{\theta}{\dot\theta}=\sqrt{|\dot\theta|}\,dt=kdt,$$
where $k$ is the value of $\sqrt{|\dot\theta|}=\sqrt{|g_{ij}\dot x^i\dot x^j|}$ constant along the trajectory.

In the context of Relativity Theory, length of a trajectory is known as \emph{proper time}, so that the above discussion can be stated as follows.
\begin{prop}
The trajectories of a relativistic mechanical system are parameterized by a  multiple of the proper time constant along each trajectory.
\end{prop}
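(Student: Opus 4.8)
The plan is to read off the statement from the relativistic hypothesis together with the structure of the class of time. The essential observation is that for a relativistic field one has $DT=0$, so that $\dot\theta=2T$ is a first integral of $D$; consequently $\dot\theta$ takes a constant value along each trajectory $\Gamma$. On the locus $\dot\theta\neq 0$ (where proper time is defined), I would then compare two horizontal $1$-forms naturally attached to the metric: the length element $\theta/\sqrt{|\dot\theta|}$, whose restriction to $\Gamma$ is the proper-time differential $d\tau$, and the form $\theta/\dot\theta$, which I claim is a representative of the class of time.

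First I would verify that $\theta/\dot\theta$ lies in the class of time on $\{\dot\theta\neq 0\}$: it is horizontal because $\theta$ is, and $\dot d\lrcorner(\theta/\dot\theta)=(1/\dot\theta)\,\dot d\lrcorner\theta=\dot\theta/\dot\theta=1$, which is exactly the defining condition. By the basic property of the class of time recalled in Section \ref{s:absolutetime}, the restriction of any such form to a (lifted) trajectory equals $dt$, where $t$ is the parameter of the curve; hence $(\theta/\dot\theta)|_\Gamma=dt$.

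Next I would factor the length element through this time-class form, writing $\theta/\sqrt{|\dot\theta|}=\sqrt{|\dot\theta|}\,(\theta/\dot\theta)$ and restricting to $\Gamma$ to obtain $d\tau=\sqrt{|\dot\theta|}\,dt=k\,dt$, with $k:=\sqrt{|\dot\theta|}=\sqrt{|g_{ij}\dot x^i\dot x^j|}$. Because $\dot\theta$ is constant along $\Gamma$, the factor $k$ is constant on that trajectory; integrating $d\tau=k\,dt$ gives $\tau=k\,t+\mathrm{const}$. Thus the parameter $t$ and the proper time $\tau$ differ by a multiplicative constant (constant along the trajectory, though it may vary from one trajectory to another, since it is fixed by the value of $T$ there) and an additive constant, which is precisely the assertion.

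The computation is short, so the only points requiring care are bookkeeping rather than genuine obstacles. The factorization $\theta/\sqrt{|\dot\theta|}=\sqrt{|\dot\theta|}\,(\theta/\dot\theta)$ is transparent when $\dot\theta>0$ but carries a sign in the Lorentzian case, so I would phrase everything with $|\dot\theta|$ and absorb the sign into the constant $k$. I would also keep the discussion on the open set $\dot\theta\neq 0$ throughout, so that both the length element and the time-class representative are defined and the restriction property applies; the substantive input remains the single fact that the relativistic condition forces $\dot\theta$ to be constant along each solution.
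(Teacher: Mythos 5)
Your proof is correct and follows essentially the same route as the paper: the relativistic condition $DT=0$ makes $\dot\theta$ constant along each trajectory, $\theta/\dot\theta$ is a representative of the class of time restricting to $dt$ on the lifted solution, and factoring the length element gives $d\tau=\sqrt{|\dot\theta|}\,dt=k\,dt$ with $k$ constant on the trajectory. Your explicit verification that $\theta/\dot\theta$ lies in the class of time and your remark on the sign in the Lorentzian case are details the paper leaves implicit, but the argument is the same.
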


\medskip

\subsection{Strictly relativistic trajectories}\label{ss:strictlyrelativistic}

Now we will select from among the possible trajectories of a relativistic mechanical system a particular type of them.

\begin{defi}\label{strictrelativistic}
The trajectories $\Gamma$ of a relativistic mechanical system $(M,T_2,\alpha)$ which are parameterized by the proper time $\tau$ (so that, $d\tau=dt$ or, equivalently, $|\dot\theta|=1$) will be called \emph{strictly relativistic trajectories}.
\end{defi}

For a given parameterized geodesic (solution of the relativistic mechanical system $(M,T_2,0)$), there is a re-parametrization with which it becomes a strictly relativistic trajectory. However, this new parameterized curve is different from the original one (their lifts to the tangent bundle $TM$ are not the same, although they travel along the same points of $M$). In essence, that is due to the second derivatives depending quadratically on the velocities. In the same way, for a relativistic mechanical system $(M,T_2,\alpha)$, where $\alpha$ depends quadratically on the velocities, we can transform every solution in a strictly relativistic solution by means of a re-parametrization: is is sufficient to replace the parameter $t$ by $kt$, where $k$ is the constant value of $\sqrt{|\dot\theta|}$ on the given parameterized path. However, the above will not be possible, for example, in the case of Lorentz-type forces which depend linearly on the velocities.

\section{Nonexistence of configuration space for a couple of relativistic particles}\label{nonexistence}

Let us consider two relativistic mechanical systems ${\s}'=(M,T'_2,\alpha')$ and ${{\s}''}=(M, T''_2,\alpha'')$ defined on the same manifold $M$. The following question arises: Can we consider ${\s}'$ and ${\s}''$  as part of a larger system that contains both?

In order to give an answer, let us fix two strictly relativistic solutions $\Gamma'$, $\Gamma''$, one of each on of these system ${\s}'$ and $\s''$. Moreover, assume that both of them connect the initial point $A\in M$ with the final point  $B\in M$. Now, we wonder if there is a mechanical system $\widehat{\s}=(\widehat M,\widehat T_2,\widehat\alpha)$, smooth maps $\varphi'\colon\widehat M\to M'$, $\varphi''\colon\widehat M\to M''$ and a solution $\widehat\Gamma$ of $\widehat S$ such that its images through $\varphi'_*$ and $\varphi''_*$ are $\Gamma'$ and $\Gamma''$ respectively?

According to the results in Section \ref{s:absolutetime}, the durations of $\Gamma'$ and $\Gamma''$ match that of $\widehat\Gamma$:
$$\text{Duration of $\Gamma'$}=\text{Duration of $\widehat\Gamma$}=\text{Duration of $\Gamma''$}$$
but, in addition, as $\Gamma'$ and $\Gamma''$ are strictly relativistic, we have
$$\text{Duration of $\Gamma'$}=\text{Proper time elapsed in $\Gamma'$}$$
$$\text{Duration of $\Gamma''$}=\text{Proper time elapsed in $\Gamma''$}.$$

As a consequence, we arrive to the equality:
$$\text{Proper time elapsed in $\Gamma'$}=\text{Proper time elapsed in $\Gamma''$}.$$

Finally, from it emerge a contradiction because it is easy to find two strictly relativistic paths with different ``proper durations'' (twins paradox, see Appendix).

As a consequence, we respond negatively to the question proposed at the beginning of this section.
\medskip

It could be argued that the consideration of relativistic trajectories, but not necessarily strictly relativistic,  improve the situation. However, this also leads to difficulties: let $\Gamma_C$ be the geodesic trajectory starting at the point $A$ and cutting  $\Gamma'$ at some point $C$. In that case, the contradiction found above could be avoided by choosing a suitable velocity $k_C=\sqrt{|\dot\theta|}\left.\phantom{\dot|}\right|_{\Gamma_C}$ for $\Gamma_C$ (this is always possible by multiplying the parameter by a appropriate constant factor; see the Appendix for an example). That seems to clash with considerations about the isotropy of space: Why should the velocities of free particles vary in magnitude depending on their direction? It seems something artificial and unjustified.

\section*{Appendix}
For questions of completeness, we now give a concrete example of two strictly relativistic trajectories joining two points of a space (or spacetime) with different proper durations.
In fact, the model we present below is not completely realistic because it is well known that the presence of masses and charges modifies both the gravity and the given electromagnetic field. That is why, although mathematically rigorous, the model  must be interpreted as a description of the motion of test particles: the conclusions of the following do not change (except for constant scale factors) if the masses and charges were `infinitesimal' and the ratio charge/mass is equal to 1, for instance. To put it another way, the example deals with a limiting case.
\medskip

Let $(M=\R^4,T_2)$ be the Minkowski spacetime manifold coordinated by $(x^0,x^1,x^2,x^3)$ and metric
$T_2=(dx^0)^2-(dx^1)^2-(dx^2)^2-(dx^3)^2$ (so that $c=1$); let us consider two particles of mass 1, one of which has a charge $e=1$; consider also the presence of a constant unity magnetic field parallel to axis $OX^3$, say $\alpha=\dot x^1dx^2-\dot x^2dx^1$. (To simplify, we will identify a trajectory in the tangent bundle with its projection to the configuration manifold $M$).
\smallskip

\noindent{\bf Charged particle.} Under the appropriate initial conditions (initial velocity in the plane $x^0x^2$, etc.), the charged  particle describe the trajectory $\Gamma'$: $$x^0=\lambda t,\quad x^1=\eta\,\textrm{cos}(t),\quad x^2=\eta\,\textrm{sin}(t),\quad x^3=0,\qquad t\in[0,\pi],$$
for given constants $\eta,\lambda$. So that $\Gamma'$ joints the points $A=\Gamma'(0)=(0,\eta,0,0)$ and $B=\Gamma'(\pi)=(\lambda\pi,-\eta,0,0)$.  The proper time along $\Gamma'$ is $d\tau=\sqrt{\lambda^2-\eta^2}\,dt$. Therefore, in the case $\sqrt{\lambda^2-\eta^2}=1$, 
 the trajectory $\Gamma'$ is parameterized by the proper time (and then, it is strictly relativistic).
\smallskip

\noindent{\bf Neutral particle.} On the other hand, the not charged particle travels along  straight lines; for instance, let us take $\Gamma''$  as follows:
$$x^0= \mu t,\quad x^1=-\frac{2\mu \eta}{\pi\lambda}\, t+\eta,\quad x^2=0,\quad x^3=0,\qquad t\in[0,\lambda\pi/\mu],$$
which also joints $A=\Gamma''(0)$ and $B=\Gamma''(\lambda\pi/\mu)$; the path $\Gamma''$ will be
parameterized by the proper time if $\mu$ is chosen such that  $\mu\sqrt{1- 4\eta^2/(\lambda^2\pi^2)}=1$ because along $\Gamma''$ we have
$d\tau=\sqrt{\mu^2-\mu^2 4\eta^2/(\lambda^2\pi^2)}\,dt$.
\medskip

The duration of $\Gamma'$ is $\pi$ and the duration of $\Gamma''$ is $\lambda\pi/\mu$. These two durations are different if $\mu\ne\lambda$, which is true (except, at most, for some exceptional values of $\eta$) because the above choices for the constants give
$$\lambda=\sqrt{1+\eta^2},\qquad \mu=\frac 1{\sqrt{1-\frac{4\eta^2}{\pi^2(1+\eta^2)}}}.$$
\bigskip

\noindent{\bf Trajectories not strictly relativistic.} In order to illustrate the possibility of relaxing the above conditions
we can impose equality of duration but allow different velocities (so that they will no longer be strictly relativistic trajectories but only relativistic ones). For this, it will be enough to modify the allowed velocity in one of the two trajectories, leaving the velocity of the other fixed. Let us consider, for example, a straight line $\Gamma_C$ starting from $A$ to reach a variable point $C=\Gamma'(s)$ and using the same duration as $\Gamma'$, $t\in[0,s]$ (which is $s$). A little computation gives a unique possibility for $\Gamma_C$:
$$x^0= \sqrt{1+\eta^2}\,\, t,\quad x^1=\eta\frac{\cos(s)-1}{s}\, t+\eta,\quad x^2=\eta\frac{\sin(s)}{s}\, t,\quad x^3=0,\qquad t\in[0,s],$$
and the magnitude of its velocity is
$$k_C=\left.\sqrt{|\dot\theta|}\,\right|_{\Gamma_C}=\sqrt{1+\eta^2\left(1-2\frac{1-\cos(s)}{s^2}\right)}.$$
We see that the necessary choice of speed so that there is no contradiction between duration and proper time depends on the point $C=\Gamma'(s)$. In other words, if we want to keep consistency between durations, it is necessary that the initial conditions for the particles are adjusted depending on the direction. It seems an unreasonable demand.
\bigskip

\subsection*{Concluding remarks.}
The arguments given in this note seem to prevent the description of relativistic $N$-particle systems by classical mechanical systems of arbitrary type (it does not matter what type of metric is assumed nor the dimension of the configuration space, etc.). The contradiction is due to a ``twins paradox effect'' that appears by the coexistence of absolute time (something inevitable in mechanical systems) and the encounters between particles. The possibility that such incompatibility can be overcome with the addition of interactions seems to be discarded: in the example of the Appendix section, mass and charge can be assumed infinitesimal, provided that the charge/mass ratio remains finite. It is worth noting also that the arguments do not depend on whether we deal specifically with the Minkowski metric: the difficulty appears for any pseudo-Riemannian metric. In the framework presented here, the road seems to be closed, and we do not know the possible way out, so other paths must be explored. Perhaps a reasonable way is to consider mechanical systems, say for two particles, each of which behaves relativistically only when they are sufficiently far apart: so to speak, they would have an asymptotically relativistic behavior.
\bigskip


\end{document}